\newtheorem{theorem}{Theorem}[section]
\newtheorem{proposition}[theorem]{Proposition}
\newtheorem{remark}[theorem]{Remark}
\theoremstyle{definition}
\newtheorem{assumption}{Assumption}[section]
\newtheorem{lemma}{Lemma}[section]
\title{Analytic Regularity and Approximation Limits of Coefficient-Constrained Shallow Networks}
\author{Jean-Gabriel Attali\\
De Vinci Higher Education, De Vinci Research Center, Paris, France \\
\texttt{jean-gabriel.attali@devinci.fr}}
\date{}
\begin{document}
\maketitle

\begin{abstract}
We study approximation limits of single-hidden-layer neural networks with analytic
activation functions under global coefficient constraints.
Under uniform $\ell^1$ bounds, or more generally sub-exponential growth of the
coefficients, we show that such networks generate model classes with strong quantitative
regularity, leading to uniform analyticity of the realized functions.
As a consequence, up to an exponentially small residual term, the error of best network
approximation on generic target functions is bounded from below by the error of best
polynomial approximation.
In particular, networks with analytic
activation functions with controlled coefficients cannot outperform
classical polynomial approximation rates on non-analytic targets.
The underlying rigidity phenomenon extends to smoother, non-analytic activations
satisfying Gevrey-type regularity assumptions, yielding sub-exponential variants of the
approximation barrier.
The analysis is entirely deterministic and relies on a comparison argument combined with
classical Bernstein-type estimates; extensions to higher dimensions are also discussed.
\end{abstract}

\medskip
\noindent\textbf{Keywords:}
neural networks, analytic activation functions, polynomial approximation,
approximation barriers, Gevrey regularity.

\medskip
\noindent\textbf{MSC 2020:}
41A10, 41A25, 68T07.

\section{Introduction}

Single-hidden-layer neural networks are classical nonlinear approximation tools whose theoretical properties have been extensively studied over the past decades. Early results established their universal approximation capabilities under mild assumptions on the activation function. More refined analyses later showed that, under additional structural assumptions on the target function, such networks may achieve remarkably fast approximation rates.
A prominent example is provided by the theory initiated by Barron \cite{Barron1993}, and earlier related contributions such as Attali and Pag`es \cite{AttaliPages1995,AttaliPages1997}, which show that neural networks can overcome the curse of dimensionality when the target function belongs to a suitable analytic or spectral class. These results crucially rely on global constraints on the network coefficients and on strong regularity properties of the target function itself.
In parallel, classical approximation theory provides sharp minimax bounds for generic smoothness classes such as Lipschitz or Sobolev spaces. These bounds show that, in the absence of additional structure, no approximation method can outperform polynomial rates \cite{DeVoreLorentz1993,DeVore1998}. This fundamental limitation naturally raises the following question:
\emph{can analyticity of the activation function alone improve approximation rates on generic target functions?}
Before addressing this question, it is important to note that analyticity-based approximation properties depend not only on the activation function but also on the effective control of the network parameters. In particular, uniform analyticity on a fixed complex neighborhood is obtained when the inner parameters remain suitably bounded; more generally, increasing inner parameters lead to shrinking analytic neighborhoods and exponentially small residual terms, as discussed later.
In this work, we provide a negative answer to the above question. We show that analyticity of the activation function, even when combined with uniform $\ell^1$ bounds on the network coefficients, does not suffice to overcome the classical polynomial approximation barrier on non-analytic target functions. More precisely, we consider single-hidden-layer networks with real-analytic activation functions and uniformly bounded coefficient sums, and we show that the error of best network approximation is bounded from below by the error of best polynomial approximation, up to an exponentially small term. This result should be understood as a rigidity phenomenon for analytic model classes, rather than as a minimax lower bound.
Our results reveal a structural limitation of networks with analytic
activation functions. While analyticity plays a central role in positive approximation results for structured target classes, it also imposes intrinsic constraints on the class of realizable functions. As a consequence, networks with analytic
activation functions cannot adapt to non-analytic features of generic target functions and cannot achieve approximation rates faster than those dictated by classical polynomial approximation theory.
The proof is entirely deterministic and relies on a comparison between best network approximation and best polynomial approximation, combined with classical Bernstein-type estimates for analytic functions. In contrast with probabilistic constructions or random sampling arguments, our approach provides a transparent explanation of why analyticity of the activation function alone cannot bypass minimax approximation barriers. Although the main results are presented for analytic activation functions, the underlying rigidity mechanism extends to smoother, non-analytic activations under quantitative regularity assumptions, such as Gevrey smoothness, leading to sub-exponential variants of the approximation barrier.

\begin{remark}

Throughout the paper, we restrict attention to functions defined on the interval $[-1,1]$.
This choice is made for notational convenience only: any compact interval can be reduced to
this setting by an affine change of variables, without affecting approximation rates or
analyticity properties.
Working on $[-1,1]$ allows us to use standard Bernstein ellipses and simplifies the presentation.
\end{remark}

The paper is organized as follows.
Section~2 introduces the class of analytic activation networks considered throughout
the paper and establishes their basic analytic properties.
Section~3 recalls classical results on polynomial approximation of analytic functions.
The main lower bound is stated and proved in Section~4.
Finally, Section~5 discusses the implications of our results and their relation to
existing approximation theories for neural networks.

\section{Analytic Activation Networks}

In this section, we introduce the class of neural networks considered throughout the paper and establish their basic analytic properties. The framework is entirely deterministic and relies on analyticity of the activation function together with global bounds on the network coefficients.
A key point is that the approximation properties derived below hold in a regime where the analytic regularity of the resulting model class can be quantitatively controlled. In particular, uniform analyticity on a fixed complex neighborhood is obtained under suitable control of the inner parameters; more general parameter growth leads to shrinking analytic neighborhoods, which will be accounted for explicitly in the subsequent estimates.

\subsection{Model definition}

Let $\varphi:\mathbb{R}\to\mathbb{R}$ be a real-analytic function.
We consider single-hidden-layer neural networks of the form
\[
g(x)=\sum_{k=1}^m \lambda_k\,\varphi(\alpha_k x),
\qquad x\in[-1,1],
\]
where $m\ge1$ and $\lambda_k,\alpha_k\in\mathbb{R}$.
For a fixed constant $C>0$, we denote by $\mathcal{N}_m(C)$ the class of network functions
whose output weights satisfy the uniform $\ell^1$ constraint
\[
\sum_{k=1}^m |\lambda_k|\le C.
\]

We omit bias terms for simplicity.
Their inclusion in expressions of the form $\varphi(\alpha x + b)$ would not affect any of the
arguments below, as long as the biases remain uniformly bounded.

The analysis relies on the fact that the resulting model class enjoys a controlled analytic
structure.
More precisely, we assume that for each $m$ there exists a complex neighborhood
$U_m\subset\mathbb{C}$ containing the interval $[-1,1]$ and a constant $C_m>0$ such that every
function $g\in\mathcal{N}_m(C)$ admits a holomorphic extension to $U_m$ satisfying
\[
\sup_{z\in U_m}|g(z)|\le C_m .
\]
This assumption captures the uniform analyticity properties induced by the activation function
together with the imposed parameter constraints, and constitutes the only structural hypothesis
used in the comparison arguments developed below.

The $\ell^1$ constraint on the output weights plays a central role in controlling the complexity
of the model.
Such constraints are classical in approximation theory and arise naturally in several contexts,
including Barron-type representations and earlier works on multilayer perceptrons
\cite{Barron1993,AttaliPages1997}.
Throughout the paper, the constraint is imposed deterministically and uniformly with respect to
the network width~$m$.

\subsection{Analytic extension and uniform bounds}

A key consequence of analyticity of the activation function is that it imposes strong
regularity constraints on the functions generated by the network.

\begin{assumption}[Analytic extension]\label{ass:analytic-extension}

Fix $\rho>1$ and $L\ge 1$.
The activation function $\varphi$ admits a holomorphic extension to an open neighborhood of
the dilated Bernstein ellipse $L E_\rho := \{Lz : z\in E_\rho\}$, and we set
\[
M_{\rho,L}(\varphi):=\sup_{z\in L E_\rho}|\varphi(z)|<\infty.
\]
\end{assumption}

Under this assumption, the analytic structure of network outputs can be controlled
uniformly.

\begin{proposition}[Uniform analytic control]\label{prop2}
Assume $\varphi$ satisfies Assumption~2.1.
Let $L\ge 1$ and consider network functions
\[
g(x)=\sum_{k=1}^m \lambda_k\,\varphi(\alpha_k x),\qquad x\in[-1,1],
\]
with $\sum_{k=1}^m|\lambda_k|\le C$ and $|\alpha_k|\le L$ for all $k$.
Then $g$ admits a holomorphic extension to $E_\rho$ and satisfies
\[
\sup_{z\in E_\rho}|g(z)|\le C\,M_{\rho,L}(\varphi).
\]
\end{proposition}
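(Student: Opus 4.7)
The plan is to reduce Proposition~2 to two elementary observations: holomorphic dependence of each summand in $g$, and a geometric containment for the Bernstein ellipse under linear rescaling by $\alpha_k$. Since the output is a finite $\ell^1$-controlled combination of such summands, the uniform bound will then follow immediately from the triangle inequality together with Assumption~2.1.

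The first step I would carry out is to fix $z\in E_\rho$ and show that the linear map $w\mapsto \alpha_k w$ sends $E_\rho$ into $LE_\rho$ whenever $|\alpha_k|\le L$. The Bernstein ellipse $E_\rho$ is convex, contains the origin, and is symmetric about $0$; in particular it is star-shaped at the origin, so $tw\in E_\rho$ for every $w\in E_\rho$ and every real $t$ with $|t|\le 1$. Applying this with $t=\alpha_k/L\in[-1,1]$ gives $(\alpha_k/L)w\in E_\rho$, i.e.\ $\alpha_k w\in LE_\rho$. Pulling the holomorphic extension of $\varphi$ on a neighborhood of $LE_\rho$ back through the affine map $z\mapsto\alpha_k z$, each function $z\mapsto\varphi(\alpha_k z)$ is therefore holomorphic on a neighborhood of $E_\rho$ and satisfies $|\varphi(\alpha_k z)|\le M_{\rho,L}(\varphi)$ for $z\in E_\rho$.

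The second step is to assemble the pieces. The candidate extension of $g$ to $E_\rho$ is simply the same finite sum $z\mapsto\sum_{k=1}^m\lambda_k\varphi(\alpha_k z)$, which is holomorphic on $E_\rho$ as a finite linear combination of holomorphic functions and agrees with the original $g$ on $[-1,1]$. The triangle inequality combined with the uniform bound on each term yields
\[
\sup_{z\in E_\rho}|g(z)|\;\le\;\sum_{k=1}^m|\lambda_k|\,\sup_{z\in E_\rho}|\varphi(\alpha_k z)|\;\le\;M_{\rho,L}(\varphi)\sum_{k=1}^m|\lambda_k|\;\le\;C\,M_{\rho,L}(\varphi),
\]
which is the claimed inequality.

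There is no substantive obstacle here: the proposition is essentially a bookkeeping statement combining the scaling property of Bernstein ellipses with the $\ell^1$ constraint on the output weights. The only conceptual point that needs to be spelled out carefully is the containment $\alpha_k E_\rho\subseteq LE_\rho$, which is a direct consequence of convexity and central symmetry of $E_\rho$; this also clarifies the role of the dilation factor $L$ in Assumption~2.1, namely absorbing the rescaling imposed by the inner weights $\alpha_k$.
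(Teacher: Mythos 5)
Your proof is correct and follows essentially the same route as the paper's: the containment $\alpha_k E_\rho\subseteq LE_\rho$, holomorphy of each summand via composition with the affine map, and the triangle inequality with the $\ell^1$ bound. The only difference is that you spell out the containment step (via star-shapedness/symmetry of the filled ellipse), which the paper simply asserts; this is a welcome but inessential elaboration.
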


\begin{proof}
Let $g(x)=\sum_{k=1}^m \lambda_k\,\varphi(\alpha_k x)$.
Fix $z\in E_\rho$. Since $|\alpha_k|\le L$, we have $\alpha_k z\in L E_\rho$.
By Assumption~2.1, $\varphi$ is holomorphic on a neighborhood of $L E_\rho$; hence
$z\mapsto \varphi(\alpha_k z)$ is holomorphic on $E_\rho$, and
\[
|\varphi(\alpha_k z)|\le \sup_{w\in L E_\rho}|\varphi(w)|=M_{\rho,L}(\varphi).
\]
Therefore, by linearity and the $\ell^1$ bound on the coefficients,
\[
\sup_{z\in E_\rho}|g(z)|
\le
\sum_{k=1}^m |\lambda_k|\,
\sup_{z\in E_\rho}|\varphi(\alpha_k z)|
\le
\Big(\sum_{k=1}^m|\lambda_k|\Big)\,M_{\rho,L}(\varphi)
\le
C\,M_{\rho,L}(\varphi),
\]
which proves the claim.
\end{proof}

\subsection{Smooth non-analytic activations: a Gevrey relaxation}

The analyticity assumption imposed in the previous sections can be relaxed to a broader
class of infinitely differentiable activation functions, provided that their smoothness
is controlled in a quantitative manner.
In this subsection, we introduce a canonical relaxation based on Gevrey regularity,
which allows one to retain a weakened but still effective form of the polynomial
approximation barrier established for analytic activations.

\paragraph{Gevrey regularity.}
Let $s \ge 1$. A function $\varphi \in C^\infty(\mathbb{R})$ is said to belong to the
Gevrey class $G^s(\mathbb{R})$ if there exist constants $C_\varphi>0$ and $R_\varphi>0$ such that
\begin{equation}
\sup_{t\in\mathbb{R}} |\varphi^{(n)}(t)| \le C_\varphi\, R_\varphi^{\,n}\,(n!)^{s},
\qquad \forall n\ge 0.
\label{eq:gevrey_phi}
\end{equation}
The case $s=1$ corresponds, up to constants, to real-analytic functions, while $s>1$
describes a large family of $C^\infty$ but non-analytic functions, including standard
compactly supported mollifiers and smooth activations obtained by regularizing
piecewise linear functions.

Gevrey classes provide a classical quantitative relaxation of analyticity, interpolating between
real-analytic and general $C^\infty$ regularity. They play a central role in approximation theory
and the study of sub-exponential approximation rates; see, e.g., Rodino~\cite{Rodino1993} or
DeVore and Lorentz~\cite{DeVoreLorentz1993}.

\paragraph{Uniform Gevrey control of network outputs.}
We consider single-hidden-layer networks of the form
\begin{equation}
g_m(x) = \sum_{k=1}^m \lambda_k\,\varphi(\alpha_k x),
\qquad x\in[-1,1],
\label{eq:gevrey_network}
\end{equation}
under the coefficient constraints
\begin{equation}
\sum_{k=1}^m |\lambda_k| \le B_m,
\qquad |\alpha_k| \le L_m .
\label{eq:coeff_constraints}
\end{equation}

\begin{proposition}[Uniform Gevrey control] \label{eq:gevrey_network_derivatives}
Assume that $\varphi\in G^s(\mathbb{R})$ satisfies \eqref{eq:gevrey_phi}.
Then every function $g_m$ of the form \eqref{eq:gevrey_network} satisfies
\begin{equation}
\| g_m^{(n)} \|_{L^\infty([-1,1])}
\le C_\varphi\, B_m\, (R_\varphi L_m)^n\, (n!)^{s},
\qquad \forall n\ge 0.
\end{equation}
In particular, the family of network outputs is uniformly bounded in the Gevrey class
$G^s([-1,1])$, with constants depending only on $B_m$ and $L_m$.
\end{proposition}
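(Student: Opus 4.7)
The proof plan is direct: differentiate term by term and apply the pointwise Gevrey bound on $\varphi$ together with the coefficient constraints \eqref{eq:coeff_constraints}. Since the sum defining $g_m$ in \eqref{eq:gevrey_network} is finite, differentiation and summation commute, and the chain rule yields, for every $n\ge 0$ and $x\in[-1,1]$,
\[
g_m^{(n)}(x) = \sum_{k=1}^m \lambda_k\, \alpha_k^{\,n}\, \varphi^{(n)}(\alpha_k x).
\]
This is the only identity I need; no analytic extension or contour argument is required, in contrast with the proof of Proposition~\ref{prop2}.

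Next, I would estimate each factor separately. The Gevrey hypothesis \eqref{eq:gevrey_phi} gives the uniform bound $|\varphi^{(n)}(\alpha_k x)|\le C_\varphi R_\varphi^{\,n}(n!)^s$ for all $k$ and all $x$, since the sup in \eqref{eq:gevrey_phi} is taken over the whole real line. The constraint $|\alpha_k|\le L_m$ yields $|\alpha_k|^n\le L_m^n$, independent of $k$. Pulling these two uniform bounds out of the sum and using the triangle inequality leaves only $\sum_k|\lambda_k|$, which is controlled by $B_m$ via \eqref{eq:coeff_constraints}.

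Combining these three estimates gives
\[
\|g_m^{(n)}\|_{L^\infty([-1,1])}
\le \Bigl(\sum_{k=1}^m |\lambda_k|\Bigr)\, L_m^n\, C_\varphi R_\varphi^{\,n}(n!)^s
\le C_\varphi\, B_m\, (R_\varphi L_m)^n\, (n!)^s,
\]
which is exactly the claimed inequality. The membership of $g_m$ in $G^s([-1,1])$, with constants $C_\varphi B_m$ and $R_\varphi L_m$, is then immediate from the definition of the Gevrey class applied to the interval $[-1,1]$.

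There is no genuine obstacle in this argument: the estimate is structurally identical to the $\ell^1$-type bound used in Proposition~\ref{prop2}, with the pointwise derivative bound replacing the holomorphic sup-norm bound, and the factor $L_m^n$ arising naturally from the chain rule instead of from dilating a Bernstein ellipse. The only point that deserves a brief comment in the write-up is that the Gevrey bound \eqref{eq:gevrey_phi} is assumed on all of $\mathbb{R}$, which is what makes the estimate of $\varphi^{(n)}(\alpha_k x)$ uniform in $k$ and $x$ even though $\alpha_k x$ ranges over $[-L_m, L_m]$ rather than $[-1,1]$; if one had instead assumed Gevrey regularity only on a fixed compact set, an analogous statement would still hold but with constants depending on $L_m$ through the enlarged domain.
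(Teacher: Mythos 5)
Your proof is correct and follows the same route as the paper: term-by-term differentiation of the finite sum, the uniform Gevrey bound on $\varphi$ over $\mathbb{R}$, the bound $|\alpha_k|^n\le L_m^n$, and the $\ell^1$ constraint giving the factor $B_m$. The closing remark about why the bound on all of $\mathbb{R}$ matters (since $\alpha_k x$ ranges over $[-L_m,L_m]$) is a useful clarification but does not change the argument.
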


\begin{proof}
Differentiating \eqref{eq:gevrey_network} $n$ times yields
\[
g_m^{(n)}(x)
= \sum_{k=1}^m \lambda_k\, \alpha_k^{\,n}\, \varphi^{(n)}(\alpha_k x).
\]
Using the triangle inequality, the coefficient constraints
\eqref{eq:coeff_constraints}, and the Gevrey bound \eqref{eq:gevrey_phi}, we obtain
\[
|g_m^{(n)}(x)|
\le \sum_{k=1}^m |\lambda_k|\, |\alpha_k|^n
\sup_{t\in\mathbb{R}} |\varphi^{(n)}(t)|
\le C_\varphi\, B_m\, (R_\varphi L_m)^n\, (n!)^{s},
\]
uniformly for $x\in[-1,1]$, which proves the claim.
\end{proof}

\subsection{Discussion and relation to analytic and Gevrey frameworks}

Proposition~\ref{prop2} and Proposition~\ref{eq:gevrey_network_derivatives} show that, under global $\ell^1$ constraints,
single-hidden-layer networks generate model classes with strong quantitative
regularity, ranging from uniform analyticity to Gevrey smoothness.
This property is central to the comparison arguments developed in the subsequent
sections.

Related analytic regularity phenomena have already been observed in earlier works on
neural network approximation.
In particular, Attali and Pag\`es \cite{AttaliPages1997} studied approximation by
multilayer perceptrons with analytic activation functions and emphasized the strong
regularity induced by analyticity.
Barron-type results \cite{Barron1993} also rely on global coefficient constraints, but
their conclusions concern approximation on structured target classes rather than
generic functions.

The present work adopts a different perspective.
Rather than exploiting analyticity to derive positive approximation rates, we use it
to identify intrinsic limitations of analytic activation networks on generic target
functions.
The uniform analytic bounds established above will serve as the starting point for the
polynomial comparison arguments developed in Sections~3 and~4.

\paragraph{Historical context.}
Explicit low-dimensional instances of the rigidity induced by analyticity were already observed
in early works by Attali and Pag\`es \cite{AttaliPages1995,AttaliPages1997}.
In particular, these authors provided an explicit bivariate proof of the convergence of partial
derivatives for Bernstein-type approximants, placing the result in an appendix as a classical
technical ingredient in their analysis of multilayer perceptrons.
From a modern perspective, this argument can be viewed as an early manifestation of the
general rigidity phenomena associated with globally parametrized analytic approximation schemes.

\section{Polynomial Approximation of Analytic Functions}

In this section, we recall classical results on the approximation of analytic functions
by algebraic polynomials.
These results form a cornerstone of approximation theory and will be used as a key
ingredient in the proof of the main lower bound.
Standard references include the works of Bernstein, Timan, DeVore and Lorentz, and
more recent expositions such as \cite{Bernstein1912,Timan1963,DeVoreLorentz1993,Trefethen2013}.

\subsection{Bernstein ellipses and analytic norms}

Let $K=[-1,1]$.
For $\rho>1$, denote by $E_\rho$ the Bernstein ellipse with foci at $\pm1$ and parameter
$\rho$, defined as the image of the circle $\{w\in\mathbb{C}:\ |w|=\rho\}$ under the
Joukowski map
\[
w \mapsto \tfrac12\bigl(w+w^{-1}\bigr).
\]
The ellipse $E_\rho$ is a compact subset of $\mathbb{C}$ containing $[-1,1]$.

For a function $h$ holomorphic on $E_\rho$, we define the associated analytic norm by
\[
M_\rho(h)=\sup_{z\in E_\rho}|h(z)|.
\]
This quantity provides a convenient measure of the strength of analyticity of $h$ in a
neighborhood of the real interval.

\subsection{Best polynomial approximation}

For a continuous function $h\in C([-1,1])$, we denote by
\[
E_m(h)=\inf\bigl\{\|h-p\|_\infty:\ p \text{ is a polynomial of degree at most } m\bigr\}
\]
the error of best uniform polynomial approximation of degree at most $m$.

If $h$ is merely continuous, the decay of $E_m(h)$ is typically polynomial and governed
by the smoothness of $h$.
In contrast, analyticity of $h$ implies exponentially fast decay of the approximation
error.

\begin{proposition}[Bernstein-type inequality]\label{prop:bernstein}
Let $\rho>1$ and let $h$ be holomorphic on $E_\rho$.
Then there exists a constant $A_\rho>0$, depending only on $\rho$, such that
\[
E_m(h)\le A_\rho\,M_\rho(h)\,\rho^{-m},
\qquad m\ge1.
\]
\end{proposition}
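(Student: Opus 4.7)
The plan is to use the classical Chebyshev expansion of $h$ on $[-1,1]$ and to exploit the exponential decay of Chebyshev coefficients forced by holomorphicity on the Bernstein ellipse. Write
\[
h(x) = \sum_{n=0}^\infty a_n T_n(x),\qquad x\in[-1,1],
\]
where $T_n$ denotes the Chebyshev polynomial of the first kind, and take as natural candidate the truncation
\[
p_m(x) = \sum_{n=0}^m a_n T_n(x).
\]
The problem then reduces to bounding the tail $\sum_{n>m} |a_n|\,|T_n(x)|$ uniformly on $[-1,1]$ by a constant multiple of $M_\rho(h)\,\rho^{-m}$.

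To obtain the exponential decay of $|a_n|$, I would use the Joukowski substitution $x=(w+w^{-1})/2$, which maps $\{|w|=\rho\}$ onto the Bernstein ellipse $E_\rho$ and turns $T_n(x)$ into $(w^n+w^{-n})/2$. Under this change of variable, the pullback $F(w):=h((w+w^{-1})/2)$ is holomorphic on the annulus $\rho^{-1}<|w|<\rho$ and admits a Laurent expansion $F(w)=\sum_{n\in\mathbb{Z}} c_n w^n$ with symmetric coefficients $c_{-n}=c_n$, since $w$ and $w^{-1}$ have the same image under the Joukowski map. The Chebyshev coefficients of $h$ then satisfy $a_0=c_0$ and $a_n=2c_n$ for $n\ge 1$. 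Applying the standard Cauchy estimate on the contour $\{|w|=r\}$ for any $r<\rho$ and letting $r\to\rho$ yields
\[
|a_n|\le 2\,M_\rho(h)\,\rho^{-n},\qquad n\ge 1,
\]
with the analogous bound $|a_0|\le M_\rho(h)$.

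Combining the coefficient bound with the trivial estimate $|T_n(x)|\le 1$ on $[-1,1]$ and summing the geometric tail gives
\[
\|h-p_m\|_\infty \le \sum_{n=m+1}^\infty |a_n| \le \frac{2\,M_\rho(h)\,\rho^{-m}}{\rho-1},
\]
so the proposition holds with $A_\rho = 2/(\rho-1)$. The only delicate point is justifying that $F$ is holomorphic on the full annulus $\rho^{-1}<|w|<\rho$, so that the symmetric Laurent expansion and the contour deformation up to radius arbitrarily close to $\rho$ are legitimate; this follows from the Joukowski map being a biholomorphism of $\{1<|w|<\rho\}$ onto $E_\rho\setminus[-1,1]$ together with the reflection symmetry $w\mapsto w^{-1}$ extending $F$ across $\{|w|=1\}$. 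Beyond this bookkeeping, the argument is entirely classical and the constant $A_\rho$ is fully explicit.
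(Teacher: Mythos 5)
Your proof is correct, and it is worth noting that the paper itself does not prove this proposition at all: it is stated as a classical fact with references to Bernstein, Timan, DeVore--Lorentz and Trefethen. The argument you give --- truncating the Chebyshev expansion, pulling back through the Joukowski map to a symmetric Laurent series on the annulus $\rho^{-1}<|w|<\rho$, applying Cauchy estimates to get $|a_n|\le 2\,M_\rho(h)\,\rho^{-n}$, and summing the geometric tail --- is precisely the standard proof found in those references, and your explicit constant $A_\rho=2/(\rho-1)$ is the usual one. The only point you flag as delicate is in fact immediate: $F=h\circ J$ is holomorphic on the whole annulus simply because the Joukowski map $J$ is holomorphic on $\mathbb{C}\setminus\{0\}$ and sends the annulus into the ellipse region where $h$ is holomorphic, so no reflection argument across $|w|=1$ is needed.
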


This result is classical and can be found in many textbooks on approximation theory;
see for instance \cite{Bernstein1912,Timan1963,DeVoreLorentz1993,Trefethen2013}.

\subsection{Consequences for analytic model classes}

Proposition~\ref{prop:bernstein} shows that any family of functions admitting a uniform
analytic extension to a fixed Bernstein ellipse necessarily enjoys exponentially fast
polynomial approximation.

More precisely, if a class $\mathcal{F}\subset C([-1,1])$ satisfies
\[
\sup_{h\in\mathcal{F}} M_\rho(h)<\infty
\]
for some $\rho>1$, then every function in $\mathcal{F}$ can be approximated uniformly by
polynomials at an exponential rate.

As shown in Section~2, networks with analytic
activation functions with uniformly $\ell^1$-bounded
coefficients generate families of functions that satisfy exactly such uniform analytic
bounds.
This observation provides the crucial link between neural network approximation and
classical polynomial approximation and forms the basis of the comparison argument
developed in the next section. From the viewpoint of classical approximation theory, networks with analytic
activation functions with
global coefficient constraints form a precompact family in C([-1,1]).
This observation explains why approximation rates beyond those achieved by polynomial
methods cannot be expected on generic target functions.

\paragraph{Remark (Gevrey extension).}
Although the results of this section are stated for analytic functions,
classical approximation theory provides analogous estimates for functions
with Gevrey regularity.
In that case, exponential polynomial approximation rates are replaced by
sub-exponential bounds depending on the Gevrey index.
Since our arguments rely only on quantitative polynomial approximation
estimates, the comparison results of Section~4 extend verbatim to the
Gevrey setting introduced in Section~2.3.

\section{Polynomial approximation barrier}

This section establishes a lower bound showing that single-hidden-layer neural networks with
analytic or Gevrey activation functions and globally controlled coefficients remain confined to
classical polynomial approximation regimes on generic target functions. The argument relies on
quantitative regularity of the model class and a direct comparison with best polynomial
approximation.

\subsection{Bernstein ellipses and polynomial approximation}

For $\rho>1$, let $E_\rho$ denote the Bernstein ellipse with foci $\pm1$. For a function $h$
holomorphic on $E_\rho$, define
\[
M_\rho(h) := \sup_{z\in E_\rho} |h(z)|.
\]
The error of best uniform polynomial approximation of degree at most $m$ is
\[
E_m(h) := \inf_{\deg p\le m} \|h-p\|_{L^\infty([-1,1])}.
\]

\begin{proposition}[Bernstein inequality]
\label{prop:bernstein}
For every $\rho>1$ there exists a constant $C(\rho)>0$ such that, for all functions $h$
holomorphic on $E_\rho$,
\[
E_m(h) \le C(\rho)\, M_\rho(h)\, \rho^{-m},
\qquad m\ge1.
\]
\end{proposition}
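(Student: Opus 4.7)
\medskip
\noindent\textbf{Proof plan.}
The plan is to reduce the approximation problem to truncation of a Chebyshev expansion, and to bound the Chebyshev coefficients using a Cauchy estimate transferred to the Bernstein ellipse via the Joukowski map. Throughout, I exploit the standard correspondence $z = \tfrac12(w+w^{-1})$, which maps the circle $|w|=\rho$ bijectively onto $E_\rho$ and the circle $|w|=1$ onto $[-1,1]$.

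First, I would set $\tilde h(w) := h\bigl(\tfrac12(w+w^{-1})\bigr)$. Since $h$ is holomorphic on $E_\rho$, the function $\tilde h$ is holomorphic on the annulus $\{1/\rho < |w| < \rho\}$ and satisfies the symmetry $\tilde h(w) = \tilde h(w^{-1})$. Its Laurent expansion therefore takes the form $\tilde h(w) = \sum_{k\in\mathbb{Z}} \hat a_k\, w^k$ with $\hat a_{-k} = \hat a_k$, which, via $T_k(x) = \tfrac12(w^k + w^{-k})$, translates into the Chebyshev expansion
\[
h(x) = \tfrac{c_0}{2} + \sum_{k=1}^\infty c_k T_k(x), \qquad c_k = 2\hat a_k,
\]
convergent uniformly on $[-1,1]$.

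Next I would estimate the coefficients $c_k$. Applying Cauchy's formula to $\tilde h$ on the circle $|w|=\rho$ gives $|\hat a_k| \le M_\rho(h)\,\rho^{-k}$ for every $k\ge 0$, hence $|c_k| \le 2 M_\rho(h)\,\rho^{-k}$. Defining the truncated expansion
\[
p_m(x) := \tfrac{c_0}{2} + \sum_{k=1}^m c_k T_k(x),
\]
which is a polynomial of degree at most $m$, and using $\|T_k\|_{L^\infty([-1,1])} \le 1$, I obtain
\[
\|h - p_m\|_{L^\infty([-1,1])} \le \sum_{k=m+1}^\infty |c_k| \le 2 M_\rho(h) \sum_{k=m+1}^\infty \rho^{-k} = \frac{2 M_\rho(h)}{\rho-1}\,\rho^{-m}.
\]
Since $E_m(h) \le \|h - p_m\|_\infty$, setting $C(\rho) := 2/(\rho-1)$ concludes the argument.

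There is no genuine obstacle here: the result is classical and every step is standard. The only point requiring some care is the justification of the Cauchy estimate on the circle $|w|=\rho$, which is not strictly inside the annulus of holomorphy of $\tilde h$; this is handled by applying the estimate on circles of radius $\rho' < \rho$ and letting $\rho'\uparrow\rho$, using continuity of $h$ up to the boundary of $E_\rho$ (which follows from holomorphy on an open neighborhood of $E_\rho$, as tacitly assumed when writing $M_\rho(h) = \sup_{z\in E_\rho}|h(z)|$). The constant $C(\rho) = 2/(\rho-1)$ obtained is the standard one and can of course be sharpened, but any finite constant depending only on $\rho$ suffices for the uses made of the proposition in Section~4.
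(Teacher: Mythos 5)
Your proof is correct. Note that the paper does not actually prove this proposition: it states it as a classical result and refers to Bernstein, Timan, DeVore--Lorentz and Trefethen. The argument you give --- transporting $h$ to the annulus via the Joukowski map, bounding the Laurent/Chebyshev coefficients by a Cauchy estimate on circles of radius $\rho'\uparrow\rho$, and truncating the Chebyshev series with the tail bound $\sum_{k>m}\rho^{-k}=\rho^{-m}/(\rho-1)$ --- is precisely the standard proof found in those references, so your write-up simply supplies the detail the paper delegates to the literature, including the correct handling of the boundary circle and an explicit admissible constant $C(\rho)=2/(\rho-1)$.
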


\subsection{Analytic activation functions}

Let $\varphi:\mathbb R\to\mathbb R$ be a real-analytic activation function.
For $m\ge1$, consider single-hidden-layer networks of the form
\[
g_m(x) = \sum_{k=1}^m \lambda_k \varphi(\alpha_k x), \qquad x\in[-1,1],
\]
under the constraints
\[
\sum_{k=1}^m |\lambda_k| \le B_m,
\qquad
|\alpha_k|\le L_m.
\]

\begin{assumption}[Analytic extension on a Bernstein ellipse]
\label{ass:analytic-ellipse}
There exist $\rho>1$ and $L\ge1$ such that $\varphi$ admits a holomorphic extension to a
neighborhood of the dilated ellipse $L E_\rho := \{Lz:\ z\in E_\rho\}$, and
\[
M_{\rho,L}(\varphi) := \sup_{w\in L E_\rho} |\varphi(w)| < \infty.
\]
\end{assumption}

\begin{lemma}[Uniform analytic control]
\label{lem:analytic-control}
Assume that $\varphi$ satisfies Assumption~\ref{ass:analytic-ellipse} and that $L_m\le L$.
Then $g_m$ admits a holomorphic extension to $E_\rho$ and
\[
M_\rho(g_m) \le B_m\, M_{\rho,L}(\varphi).
\]
\end{lemma}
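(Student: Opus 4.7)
The plan is to adapt the argument already used in the proof of Proposition~\ref{prop2}, since the present statement is essentially the same uniform analytic bound but with the coefficient budgets now allowed to depend on $m$ through $B_m$ and $L_m$. The key observation is that the hypothesis $L_m\le L$ keeps every inner-parameter evaluation inside the regime where Assumption~\ref{ass:analytic-ellipse} directly applies.

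First, I would fix an arbitrary $z\in E_\rho$ and, for each $k\in\{1,\dots,m\}$, note that since $|\alpha_k|\le L_m\le L$, the point $\alpha_k z$ belongs to the dilated ellipse $LE_\rho$. By Assumption~\ref{ass:analytic-ellipse}, $\varphi$ is holomorphic on an open neighborhood of $LE_\rho$; hence each map $z\mapsto\varphi(\alpha_k z)$ extends holomorphically to $E_\rho$, with $|\varphi(\alpha_k z)|\le M_{\rho,L}(\varphi)$ uniformly on $E_\rho$.

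Next, since $g_m$ is a finite linear combination of these holomorphic functions, it admits a holomorphic extension to $E_\rho$ by linearity. Applying the triangle inequality and using the $\ell^1$ constraint $\sum_k|\lambda_k|\le B_m$, I would conclude $|g_m(z)|\le B_m\,M_{\rho,L}(\varphi)$ for every $z\in E_\rho$, which yields the claimed bound on $M_\rho(g_m)$.

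There is no substantive obstacle: the lemma is a direct transcription of Proposition~\ref{prop2}, with the constant $C$ replaced by $B_m$ and the uniform bound $|\alpha_k|\le L$ replaced by the weaker condition $|\alpha_k|\le L_m\le L$. The only subtlety worth flagging is that the final estimate depends on $L_m$ only implicitly through the fixed dominating quantity $M_{\rho,L}(\varphi)$, which is precisely why the comparison argument developed in Section~4 will be sensitive to the regime $L_m\le L$ but degrade if $L_m$ is allowed to grow past $L$.
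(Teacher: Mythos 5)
Your proposal is correct and follows essentially the same argument as the paper: for $z\in E_\rho$ the bound $|\alpha_k|\le L_m\le L$ places $\alpha_k z$ in $LE_\rho$, so each term is holomorphic and bounded by $M_{\rho,L}(\varphi)$, and the triangle inequality with the $\ell^1$ constraint gives $M_\rho(g_m)\le B_m\,M_{\rho,L}(\varphi)$. This is indeed just Proposition~\ref{prop2} with $C$ replaced by $B_m$, exactly as the paper's own proof proceeds.
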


\begin{proof}
For $z\in E_\rho$ and $1\le k\le m$, the bound $|\alpha_k|\le L$ implies $\alpha_k z\in L E_\rho$.
Thus
\[
|\varphi(\alpha_k z)| \le M_{\rho,L}(\varphi),
\]
and
\[
|g_m(z)|
\le \sum_{k=1}^m |\lambda_k|\,|\varphi(\alpha_k z)|
\le B_m\, M_{\rho,L}(\varphi).
\]
\end{proof}

\begin{proposition}[Polynomial approximation of analytic network outputs]
\label{prop:analytic-poly}
Under the assumptions of Lemma~\ref{lem:analytic-control}, there exists $C(\rho)>0$ such that
\[
E_m(g_m) \le C(\rho)\, B_m\, M_{\rho,L}(\varphi)\, \rho^{-m}.
\]
\end{proposition}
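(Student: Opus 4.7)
The plan is to chain the two previously established ingredients: Lemma~\ref{lem:analytic-control} controls the analytic size of $g_m$ on $E_\rho$, while Proposition~\ref{prop:bernstein} converts such an analytic bound into a polynomial approximation rate. Since $g_m$ is itself a function holomorphic on $E_\rho$, both results apply directly to it, and no further structural work on the network form is needed.

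Concretely, I would first invoke Lemma~\ref{lem:analytic-control} under its stated hypotheses (Assumption~\ref{ass:analytic-ellipse} on $\varphi$ and $L_m\le L$) to conclude that $g_m$ admits a holomorphic extension to $E_\rho$ with
\[
M_\rho(g_m) \le B_m\, M_{\rho,L}(\varphi).
\]
This is the key step because it turns the finite-sum network structure into a single quantitative analytic bound on the Bernstein ellipse, independent of $m$ in its functional form (the $m$-dependence being entirely absorbed into the prefactor $B_m$).

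Next, I would apply the Bernstein-type estimate of Proposition~\ref{prop:bernstein} to the function $h=g_m$. That proposition yields a constant $C(\rho)>0$, depending only on $\rho$, such that
\[
E_m(g_m) \le C(\rho)\, M_\rho(g_m)\, \rho^{-m}.
\]
Substituting the bound on $M_\rho(g_m)$ obtained in the first step gives
\[
E_m(g_m) \le C(\rho)\, B_m\, M_{\rho,L}(\varphi)\, \rho^{-m},
\]
which is exactly the claimed inequality.

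There is no genuine obstacle here: both the analytic control and the Bernstein inequality are already in hand, and the proposition is a straightforward composition of the two. The only point to double-check is that the hypotheses of Lemma~\ref{lem:analytic-control} (in particular $L_m\le L$) are indeed assumed, so that $\alpha_k z\in L E_\rho$ for all $z\in E_\rho$ and all $k$; this is built into the statement via the phrase ``Under the assumptions of Lemma~\ref{lem:analytic-control}''. The constant $C(\rho)$ in the conclusion is exactly the one produced by Proposition~\ref{prop:bernstein}, with no additional dependence on $m$, $B_m$, or $L$.
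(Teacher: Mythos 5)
Your argument is correct and is exactly the intended one: the paper states Proposition~\ref{prop:analytic-poly} without a written proof precisely because it is the immediate composition of Lemma~\ref{lem:analytic-control} (giving $M_\rho(g_m)\le B_m\,M_{\rho,L}(\varphi)$) with the Bernstein inequality of Proposition~\ref{prop:bernstein} applied to $h=g_m$. Your observation that the hypothesis $L_m\le L$ is what guarantees $\alpha_k z\in L E_\rho$ is also the right point to check, so nothing is missing.
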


\begin{theorem}[Polynomial approximation barrier: analytic case]
\label{thm:analytic-barrier}
Let $\varphi$ satisfy Assumption~\ref{ass:analytic-ellipse}. Then for all
$f\in C([-1,1])$ and all $m\ge1$,
\[
\inf_{g_m} \|f-g_m\|_{L^\infty([-1,1])}
\ge
E_m(f)
-
C(\rho)\, B_m\, M_{\rho,L}(\varphi)\, \rho^{-m},
\]
where the infimum is taken over all networks satisfying the above constraints.
\end{theorem}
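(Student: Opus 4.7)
The plan is to reduce the statement to a one-line triangle-inequality argument, once the polynomial approximation result for network outputs (Proposition~\ref{prop:analytic-poly}) is in hand. The underlying idea is that if a network $g_m$ is itself exponentially well approximated by a polynomial of degree $\le m$, then any target function $f$ that $g_m$ approximates well must also admit a comparably good polynomial approximation, up to the exponentially small slack coming from the network-to-polynomial step.

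Concretely, I would first fix an admissible network $g_m$ satisfying the coefficient and scaling constraints of the theorem. By Lemma~\ref{lem:analytic-control}, $g_m$ extends holomorphically to $E_\rho$ with $M_\rho(g_m)\le B_m M_{\rho,L}(\varphi)$. Applying Proposition~\ref{prop:analytic-poly} (itself a direct consequence of the Bernstein inequality, Proposition~\ref{prop:bernstein}), I obtain a polynomial $p_m$ of degree at most $m$ such that
\[
\|g_m-p_m\|_{L^\infty([-1,1])} \le C(\rho)\, B_m\, M_{\rho,L}(\varphi)\, \rho^{-m}.
\]
Since $p_m$ is an admissible competitor for best polynomial approximation of $f$ at degree $m$, the definition of $E_m(f)$ gives $E_m(f)\le \|f-p_m\|_{L^\infty([-1,1])}$.

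The triangle inequality then yields
\[
E_m(f) \le \|f-p_m\|_{L^\infty([-1,1])} \le \|f-g_m\|_{L^\infty([-1,1])} + \|g_m-p_m\|_{L^\infty([-1,1])},
\]
and rearranging produces the pointwise lower bound
\[
\|f-g_m\|_{L^\infty([-1,1])} \ge E_m(f) - C(\rho)\, B_m\, M_{\rho,L}(\varphi)\, \rho^{-m}.
\]
Taking the infimum over all networks $g_m$ satisfying the constraints preserves the right-hand side, which is independent of the particular choice of $g_m$, and yields the claim.

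There is essentially no serious obstacle here: the work has already been done in building up the uniform analytic bound on $g_m$ (Lemma~\ref{lem:analytic-control}) and in the classical Bernstein estimate (Proposition~\ref{prop:bernstein}). The only point requiring mild attention is to verify that the polynomial $p_m$ produced by Proposition~\ref{prop:analytic-poly} has degree at most $m$, matching the degree used in the definition of $E_m(f)$; this is indeed the case, since the Bernstein estimate is applied at degree $m$. Everything else is a clean triangle-inequality comparison, so the proof is short and the heuristic content of the theorem — that the approximation error of a coefficient-constrained analytic network cannot beat polynomial approximation of the target beyond an exponentially small residual — is fully transparent.
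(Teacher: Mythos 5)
Your proposal is correct and follows essentially the same route as the paper: compare $f$ to the network $g_m$ through a near-best polynomial approximant of $g_m$, use the triangle inequality to get $\|f-g_m\|_\infty \ge E_m(f) - E_m(g_m)$, and bound $E_m(g_m)$ via Lemma~\ref{lem:analytic-control} and Proposition~\ref{prop:analytic-poly}. The only cosmetic difference is that you state the forward triangle inequality and rearrange, while the paper writes the reverse form directly.
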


\begin{proof}
Let $p_m$ be a best polynomial approximant of degree at most $m$ to $g_m$. Then
\[
\|f-g_m\|_\infty
\ge \|f-p_m\|_\infty - \|g_m-p_m\|_\infty
\ge E_m(f) - E_m(g_m).
\]
Apply Proposition~\ref{prop:analytic-poly}.
\end{proof}

\begin{remark}[Optimized scaling]
If $\varphi$ is holomorphic and bounded only on a strip $|\Im z|<\delta$, then $g_m$ is
holomorphic on $|\Im z|<\delta/L_m$. Choosing $\rho_m>1$ such that
$E_{\rho_m}\subset\{|\Im z|<\delta/L_m\}$ with $\rho_m=1+c\,\delta/L_m$ yields
\[
E_m(g_m) \le A\, B_m\, \exp\!\Big(-c\,\frac{m}{L_m}\Big),
\]
and the lower bound of Theorem~\ref{thm:analytic-barrier} holds with this optimized residual.
\end{remark}

\subsection{Extension to Gevrey activation functions}

Let $s\ge1$. A function $\psi\in C^\infty([-1,1])$ belongs to the Gevrey class $G^s([-1,1])$
if there exist constants $C_\psi,R_\psi>0$ such that
\[
\|\psi^{(n)}\|_{L^\infty([-1,1])}
\le C_\psi\, R_\psi^{\,n}\,(n!)^s,
\qquad n\ge0.
\]

\begin{proposition}[Polynomial approximation of Gevrey functions]
\label{prop:gevrey-poly}
If $h\in G^s([-1,1])$, then there exist constants $A,c>0$ such that
\[
E_m(h) \le A\,\exp\!\big(-c\, m^{1/s}\big),
\qquad m\ge1.
\]
\end{proposition}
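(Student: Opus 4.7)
}

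The plan is to combine a classical Jackson-type inequality on $[-1,1]$ with the Gevrey bound on $\|h^{(n)}\|_{L^\infty}$, and then optimize the order of smoothness $n$ as a function of the polynomial degree $m$. Concretely, I will use the standard Jackson estimate of the form
\[
E_m(h) \le K_1\,\Big(\frac{K_2}{m}\Big)^{n}\,\|h^{(n)}\|_{L^\infty([-1,1])},
\qquad m\ge n\ge 0,
\]
where $K_1,K_2>0$ are absolute constants (see, e.g., DeVore--Lorentz \cite{DeVoreLorentz1993}). Substituting the Gevrey bound $\|h^{(n)}\|_{L^\infty}\le C_\psi R_\psi^{\,n}(n!)^{s}$ gives
\[
E_m(h) \le K_1\,C_\psi\,\Big(\frac{K_2 R_\psi}{m}\Big)^{n}\,(n!)^{s}.
\]

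Next, I will apply Stirling's formula in the form $n!\le e\,n^{n+1/2}e^{-n}$, so $(n!)^{s}\le C_s\,n^{ns+s/2}e^{-ns}$ for a constant $C_s$ depending only on $s$. This yields
\[
E_m(h) \le K\,(K' R_\psi)^{n}\,\frac{n^{ns+s/2}\,e^{-ns}}{m^{n}},
\]
with $K,K'$ absorbing numerical constants. The exponent (ignoring the polynomial prefactor $n^{s/2}$) is essentially $n\log(K' R_\psi) + ns\log n - ns - n\log m$.

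Now I would optimize in $n$ by choosing $n=\lfloor c_0\,m^{1/s}\rfloor$ with $c_0=(K'R_\psi e^{s})^{-1/s}$ (or any similar constant chosen so that $s\log n\approx \log m - \log(K'R_\psi)$ up to a constant). This kills the dominant $ns\log(n/m^{1/s})$ term and leaves a linear contribution $-c\,n\asymp -c\,m^{1/s}$ in the exponent, absorbing the polynomial factor $n^{s/2}$ into the constant or into a slight adjustment of $c$. The final inequality
\[
E_m(h)\le A\,\exp\!\big(-c\,m^{1/s}\big)
\]
then holds for all $m\ge m_0$, and for $m<m_0$ one enlarges $A$ using the trivial bound $E_m(h)\le\|h\|_\infty$.

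The only real obstacle is the choice of the optimizing $n$ and the bookkeeping of constants: one must ensure $n\le m$ for the Jackson inequality to apply, and one must track the $s$-dependence carefully so that the constant $c$ in the exponent is strictly positive and independent of $m$. The polynomial factor $n^{s/2}$ coming from Stirling is the only cosmetic nuisance; it is absorbed either by slightly decreasing $c$ or by merging it into $A$, since $n^{s/2}\le \exp(\varepsilon\,m^{1/s})$ for any $\varepsilon>0$ when $m$ is large.
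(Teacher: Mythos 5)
Your proposal is correct: the Jackson-type estimate $E_m(h)\le K_1 (K_2/m)^n \|h^{(n)}\|_\infty$ combined with the Gevrey bound, Stirling's formula, and the choice $n\asymp m^{1/s}$ (with the constant tuned so the $n\log m$ terms cancel and a strictly negative linear-in-$n$ exponent remains) is exactly the standard classical argument for sub-exponential polynomial approximation of Gevrey functions. The paper itself gives no proof of Proposition~\ref{prop:gevrey-poly}, presenting it as a classical fact from approximation theory, so your blind reconstruction supplies precisely the omitted argument; the only point to keep explicit, which you already flag, is that the Jackson inequality in this form requires $n$ comfortably below $m$ (e.g.\ $n\le m/2$, since the exact Favard-type denominator is $m(m-1)\cdots(m-n+1)$ rather than $m^n$), which your choice $n\asymp m^{1/s}$ satisfies for large $m$, the small-$m$ range being absorbed into $A$ via $E_m(h)\le\|h\|_\infty$.
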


Let $\varphi\in G^s(\mathbb R)$ and consider networks $g_m$ as above.

\begin{lemma}[Uniform Gevrey bounds for network outputs]
\label{lem:gevrey-control}
There exist constants $C_\varphi,R_\varphi>0$ such that
\[
\|g_m^{(n)}\|_{L^\infty([-1,1])}
\le C_\varphi\, B_m\, (R_\varphi L_m)^n\, (n!)^s,
\qquad n\ge0.
\]
\end{lemma}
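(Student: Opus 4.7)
The plan is to follow almost verbatim the proof of Proposition~\ref{eq:gevrey_network_derivatives}, which already established the identical inequality in the notation of Section~2.3; here we merely recast it in the language of Section~4.3 so that it can be fed directly into the Gevrey analogue of the polynomial approximation barrier. The argument is purely algebraic, and no convergence issues arise because the network is a finite sum.

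First I would differentiate the identity $g_m(x)=\sum_{k=1}^m \lambda_k\,\varphi(\alpha_k x)$ term by term. Each differentiation brings down one factor of $\alpha_k$ from the inner linear map, and because the sum is finite, differentiation and summation commute without further justification; this yields
\[
g_m^{(n)}(x)
=\sum_{k=1}^m \lambda_k\,\alpha_k^{\,n}\,\varphi^{(n)}(\alpha_k x),\qquad x\in[-1,1].
\]

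Next I would take absolute values and apply the triangle inequality. Since $\varphi\in G^s(\mathbb R)$, the Gevrey bound \eqref{eq:gevrey_phi} is uniform in $t\in\mathbb R$, so the fact that the argument $\alpha_k x$ may range over an interval of length up to $2L_m$ is absorbed for free:
\[
|g_m^{(n)}(x)|
\le \sum_{k=1}^m |\lambda_k|\,|\alpha_k|^{\,n}\,\sup_{t\in\mathbb R}|\varphi^{(n)}(t)|
\le C_\varphi\,R_\varphi^{\,n}\,(n!)^{s}\sum_{k=1}^m |\lambda_k|\,|\alpha_k|^{\,n}.
\]
Inserting the coefficient constraints $|\alpha_k|\le L_m$ and $\sum_k|\lambda_k|\le B_m$ and taking the supremum over $x\in[-1,1]$ then yields the claimed inequality with constants $C_\varphi,R_\varphi$ inherited directly from the Gevrey bound on $\varphi$.

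I do not anticipate any real obstacle. The only subtle point worth flagging is that \eqref{eq:gevrey_phi} is stated with a supremum over all of $\mathbb R$, rather than only over $[-1,1]$; this global form of the Gevrey bound is what allows the constants in the conclusion to depend on $L_m$ only through the factor $(R_\varphi L_m)^n$. If one only had a Gevrey bound on a bounded set, the constants $C_\varphi,R_\varphi$ would inflate with $L_m$ and the estimate would degrade, so this uniformity is essential to the clean scaling displayed in the lemma.
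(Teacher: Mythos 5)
Your proposal is correct and coincides with the paper's own argument: term-by-term differentiation of the finite sum, the triangle inequality, the global Gevrey bound \eqref{eq:gevrey_phi}, and the constraints $\sum_k|\lambda_k|\le B_m$, $|\alpha_k|\le L_m$, exactly as in the proof of Proposition~\ref{eq:gevrey_network_derivatives} which the paper's proof of Lemma~\ref{lem:gevrey-control} simply re-invokes. Your closing remark on the necessity of the supremum over all of $\mathbb{R}$ in \eqref{eq:gevrey_phi} is a correct and worthwhile observation, but it does not change the argument.
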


\begin{proof}
Differentiation yields
\[
g_m^{(n)}(x)
= \sum_{k=1}^m \lambda_k \alpha_k^n \varphi^{(n)}(\alpha_k x).
\]
Using the coefficient bounds and the Gevrey estimate on $\varphi$ gives the result.
\end{proof}

\begin{theorem}[Polynomial approximation barrier: Gevrey case]
\label{thm:gevrey-barrier}
Let $\varphi\in G^s(\mathbb R)$ with $s>1$. Then there exist constants $A,c>0$ such that for all
$f\in C([-1,1])$ and all $m\ge1$,
\[
\inf_{g_m} \|f-g_m\|_{L^\infty([-1,1])}
\ge
E_m(f)
-
A\, B_m\, \exp\!\Big(-c\, (m/L_m)^{1/s}\Big).
\]
\end{theorem}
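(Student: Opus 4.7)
The plan is to reproduce the structure of the proof of Theorem~\ref{thm:analytic-barrier} (analytic case), with the exponential rate $\rho^{-m}$ replaced by the sub-exponential Gevrey rate $\exp(-c(m/L_m)^{1/s})$. The comparison with best polynomial approximation is formally identical; the substantive work lies in pushing the coefficient-dependence $B_m, L_m$ through the Gevrey polynomial approximation estimate.

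First, I would apply the triangle inequality. Fix an admissible network $g_m$ and let $p_m$ denote a best polynomial approximant of degree at most $m$ to $g_m$, so that $\|p_m - g_m\|_\infty = E_m(g_m)$. Then
\[
\|f - g_m\|_\infty \;\ge\; \|f - p_m\|_\infty - \|p_m - g_m\|_\infty \;\ge\; E_m(f) - E_m(g_m).
\]
Taking the infimum over all admissible $g_m$ reduces the statement to proving the bound
\[
E_m(g_m) \;\le\; A\, B_m\, \exp\!\bigl(-c\,(m/L_m)^{1/s}\bigr),
\]
uniformly over all $g_m$ satisfying the coefficient constraints.

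Next, I would invoke Lemma~\ref{lem:gevrey-control} to obtain the uniform Gevrey bound
\[
\|g_m^{(n)}\|_{L^\infty([-1,1])} \;\le\; C_\varphi\, B_m\, (R_\varphi L_m)^n\, (n!)^s, \qquad n\ge 0,
\]
which places $g_m$ in $G^s([-1,1])$ with \emph{explicit} Gevrey constants $C_\psi = C_\varphi B_m$ and $R_\psi = R_\varphi L_m$. The key point is that Proposition~\ref{prop:gevrey-poly} is not merely applied as a black box but \emph{with tracked constants}. Using a Jackson-type estimate in the form $E_m(h) \le K_s\, \|h^{(n)}\|_\infty / m^n$ for every $n \le m$, substitution of the Gevrey bound yields
\[
E_m(g_m) \;\le\; K_s\, C_\varphi\, B_m\, \frac{(R_\varphi L_m)^n (n!)^s}{m^n}.
\]
Optimizing by choosing $n \asymp (m/(e R_\varphi L_m))^{1/s}$ (which is an integer at least $1$ as soon as $m/L_m$ is large enough; the finitely many remaining values are absorbed into the constant $A$) yields the desired sub-exponential bound, with $R_\varphi$ absorbed into the exponential constant $c$ and $C_\varphi$ absorbed into $A$. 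Combining with the triangle inequality gives the theorem.

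The main obstacle is verifying that the Gevrey polynomial approximation rate scales correctly in $L_m$, i.e.\ that the factor $R_\varphi L_m$ enters the \emph{exponent} as $(m/L_m)^{1/s}$ and not as an uncontrolled polynomial prefactor. This is exactly where $s>1$ matters: the optimization in $n$ is smooth and the minimum is achieved at an interior point, producing the stretched-exponential rate $\exp(-c(m/L_m)^{1/s})$; the case $s=1$ degenerates to the analytic regime already handled by Theorem~\ref{thm:analytic-barrier} with geometric rate. Apart from this optimization, every other step is a straightforward transcription of the analytic argument.
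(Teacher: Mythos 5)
Your proposal is correct and follows essentially the same route as the paper: the triangle-inequality comparison reducing the claim to a bound on $E_m(g_m)$, then Lemma~\ref{lem:gevrey-control} combined with the Gevrey polynomial approximation estimate. In fact your version is more complete than the paper's one-line proof, since you make explicit (via the Jackson-type bound and the optimization in $n$) how the constants $B_m$ and $L_m$ propagate into the prefactor and the exponent $(m/L_m)^{1/s}$, which the paper's citation of Proposition~\ref{prop:gevrey-poly} leaves implicit; the only small imprecision is that the excluded regime is ``$m/L_m$ bounded'' rather than finitely many $m$, but there the bound $E_m(g_m)\le\|g_m\|_\infty\le C_\varphi B_m$ lets you absorb it into $A$ anyway.
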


\begin{proof}
Proceed as in the analytic case:
\[
\|f-g_m\|_\infty \ge E_m(f) - E_m(g_m).
\]
Apply Proposition~\ref{prop:gevrey-poly} to $g_m$, using Lemma~\ref{lem:gevrey-control}.
\end{proof}

\begin{remark}[Extended regularization paradox]
\label{rem:regularization-paradox}
The above results show that the effectiveness of neural network approximation is governed by
the \emph{combined growth} of the output weights and the inner parameters.
In the analytic setting, the residual term in
Theorem~\ref{thm:analytic-barrier} tends to zero whenever
\[
\log B_m = o\!\left(\frac{m}{L_m}\right),
\]
in which case the network class remains confined to the same approximation regime as
classical polynomial methods on generic target functions.

More generally, under the Gevrey regularity assumptions of
Theorem~\ref{thm:gevrey-barrier} with index $s>1$, the corresponding condition becomes
\[
\log B_m = o\!\left(\left(\frac{m}{L_m}\right)^{1/s}\right),
\]
and the exponential residual term is replaced by a sub-exponential one.
In both cases, regularization mechanisms that control only the magnitude of the output
weights are insufficient to guarantee improved approximation power.
As long as the combined growth condition holds, the model class remains locked into an
approximation regime dictated by its quantitative regularity, and cannot adapt efficiently
to non-analytic or non-Gevrey features of generic target functions.
\end{remark}


\begin{remark}[Relation with Barron-type weighted constraints]
\label{rem:barron-bridge}
Barron-type approximation results are based on a weighted variation control of the form
\begin{equation}\label{eq:barron-weighted}
\sum_{k=1}^m |\lambda_k|\,\|\alpha_k\| \le C_{\mathrm B},
\end{equation}
which naturally arises when discretizing an integral representation of the target function.
This constraint differs fundamentally from the uniform bounds considered in the present work,
as it does not impose any \emph{a priori} restriction on the magnitude of the inner parameters
$\alpha_k$.

\medskip
\noindent\emph{Derivative control.}
Under~\eqref{eq:barron-weighted}, and assuming that $\varphi$ is continuously differentiable
with locally bounded derivative, the associated network function
\[
g(x)=\sum_{k=1}^m \lambda_k\,\varphi(\langle \alpha_k,x\rangle)
\]
admits a uniformly bounded gradient on $[-1,1]^d$, namely
\[
\|\nabla g\|_{L^\infty([-1,1]^d)}
\;\le\;
\sup_{|t|\le L\sqrt d}|\varphi'(t)|\,
\sum_{k=1}^m |\lambda_k|\,\|\alpha_k\|,
\qquad
L:=\max_k\|\alpha_k\|.
\]
In particular, if $\varphi'$ is bounded on $\mathbb R$, the Lipschitz constant of $g$ is
uniformly controlled by $C_{\mathrm B}$, independently of the network width.
This shows that the Barron-type weighted constraint enforces a strong geometric regularity of
the model class.

\medskip
\noindent\emph{Frequency truncation and effective analyticity.}
More generally, the weighted constraint~\eqref{eq:barron-weighted} allows one to decompose
the network function as
\[
g = g_{\le L} + r_{>L},
\qquad
g_{\le L}(x):=\sum_{\|\alpha_k\|\le L}\lambda_k\,
\varphi(\langle\alpha_k,x\rangle),
\]
for any threshold $L>0$.
The high-frequency remainder satisfies the uniform bound
\[
\|r_{>L}\|_\infty
\le
\sum_{\|\alpha_k\|>L}|\lambda_k|\,\sup|\varphi|
\le
\frac{C_{\mathrm B}}{L}\,\sup|\varphi|.
\]
On the other hand, the truncated part $g_{\le L}$ admits a holomorphic extension to a complex
neighborhood of $[-1,1]^d$ whose size scales as $1/L$.
Classical Bernstein-type estimates for multivariate analytic functions then imply that
$g_{\le L}$ can be approximated by multivariate polynomials of total degree at most $m$ with
an error decaying as $\exp(-c\,m/L)$.

\medskip
\noindent\emph{Interpretation.}
This decomposition shows that Barron-type constructions escape the polynomial approximation
barrier identified in this work by allowing the effective analytic neighborhood to shrink with
the network width through the presence of increasingly large inner parameters.
In this regime, uniform analyticity is lost, and the comparison argument with polynomial
approximation no longer applies.
From this perspective, Barron-type approximation results operate precisely at the boundary of
the analyticity-controlled regime considered here.
\end{remark}

\subsection{Extension to higher dimension}

We now extend the previous results to functions defined on the hypercube $[-1,1]^d$.
Under the same assumptions on the activation function and the coefficient constraints,
network outputs admit holomorphic extensions to suitable polyelliptic neighborhoods of the
domain.

Classical Bernstein-type inequalities for multivariate analytic functions on polyelliptic
domains yield exponential bounds on the best polynomial approximation error; see, e.g.,
\cite{BorweinErdelyi1995,Pinkus1999}.

For $f\in C([-1,1]^d)$, we denote by $E_m^{(d)}(f)$ the error of best uniform approximation of
$f$ by multivariate polynomials of total degree at most $m$.

\begin{theorem}[Polynomial approximation barrier in dimension $d$]
\label{thm:main-d}
Let $d\ge1$ and let $\varphi:\mathbb R\to\mathbb R$ be a real-analytic activation function
admitting a holomorphic extension to a complex neighborhood of the real axis.
For each $m\ge1$, let $B_m>0$ and $L_m>0$, and consider the class
\[
\mathcal N_m^{(d)}(B_m,L_m)=
\Bigl\{
g(x)=\sum_{k=1}^m \lambda_k\,\varphi(\langle \alpha_k,x\rangle):
\sum_{k=1}^m|\lambda_k|\le B_m,\ \|\alpha_k\|\le L_m
\Bigr\}.
\]
Then there exist constants $A,c>0$, depending only on $\varphi$ and $d$, such that for all
$f\in C([-1,1]^d)$ and all $m\ge1$,
\[
\inf_{g\in\mathcal N_m^{(d)}(B_m,L_m)}\|f-g\|_{L^\infty([-1,1]^d)}
\;\ge\;
E_m^{(d)}(f)
-
A\,B_m\,\exp\!\Big(-c\,\frac{m}{L_m}\Big).
\]
\end{theorem}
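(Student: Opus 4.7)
The plan is to lift the one-dimensional argument of Theorem~\ref{thm:analytic-barrier} to the tensorial setting: produce a uniform holomorphic extension of each $g\in\mathcal N_m^{(d)}(B_m,L_m)$ to a suitable polyelliptic neighborhood of $[-1,1]^d$, apply a multivariate Bernstein-type estimate to bound $E_m^{(d)}(g)$ by an exponentially small residual, and then close with the same triangle-inequality comparison used in the univariate case.

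Step one is the analytic extension. Since $\varphi$ is real-analytic with a holomorphic extension to a neighborhood of $\mathbb R$, I fix $\delta>0$ such that $\varphi$ is holomorphic and bounded by some $M_\delta$ on the complex strip $S_\delta:=\{t\in\mathbb C:|\Im t|<\delta\}$. For $z\in\mathbb C^d$ with $|\Im z_j|\le\eta$ for each coordinate and $\alpha_k\in\mathbb R^d$ with $\|\alpha_k\|\le L_m$, the estimate $|\Im\langle\alpha_k,z\rangle|\le\|\alpha_k\|\,\|\Im z\|_2\le L_m\sqrt d\,\eta$ shows that choosing $\eta=\delta/(2L_m\sqrt d)$ keeps $\langle\alpha_k,z\rangle$ inside $S_\delta$. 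Hence $z\mapsto\varphi(\langle\alpha_k,z\rangle)$ is holomorphic on the polystrip $T_\eta:=\{z:|\Im z_j|<\eta\}$ and bounded by $M_\delta$; summing and using $\sum_k|\lambda_k|\le B_m$ yields the uniform bound $|g(z)|\le B_m M_\delta$ on $T_\eta$.

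Step two converts this polystrip bound into a polyelliptic one. Since the semi-minor axis of $E_\rho$ behaves as $\tfrac12(\rho-\rho^{-1})\sim\rho-1$ near $\rho=1$, choosing $\rho_m=1+c_1\delta/L_m$ for a suitable $c_1=c_1(d)>0$ ensures the polyellipse $E_{\rho_m}^d:=E_{\rho_m}\times\cdots\times E_{\rho_m}$ lies inside $T_\eta$, while preserving the bound $\sup_{E_{\rho_m}^d}|g|\le B_m M_\delta$. Invoking the multivariate Bernstein-type inequality on polyelliptic domains recalled in \cite{BorweinErdelyi1995,Pinkus1999}, which gives $E_m^{(d)}(h)\le A'(d)\,\rho_m^{-m}\,\sup_{E_{\rho_m}^d}|h|$ for any function $h$ holomorphic on $E_{\rho_m}^d$, I obtain
\[
E_m^{(d)}(g)\;\le\;A'(d)\,B_m\,M_\delta\,\rho_m^{-m}\;\le\;A\,B_m\,\exp\!\bigl(-c\,m/L_m\bigr),
\]
with $c>0$ and $A>0$ depending only on $\varphi$ and $d$, using $\log\rho_m\gtrsim 1/L_m$.

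Step three is the comparison argument, verbatim from the univariate case. For any $g\in\mathcal N_m^{(d)}(B_m,L_m)$ and any polynomial $p$ of total degree at most $m$,
\[
\|f-g\|_\infty\;\ge\;\|f-p\|_\infty-\|g-p\|_\infty.
\]
Choosing $p$ as a best polynomial approximant to $g$ gives $\|f-g\|_\infty\ge E_m^{(d)}(f)-E_m^{(d)}(g)$; inserting the bound from Step two and taking the infimum over $g$ yields the claimed inequality. The only genuinely delicate point, and the step I expect to be the main obstacle, is the calibration of the polyelliptic neighborhood: one must carefully track the dimensional constant $\sqrt d$ from the Cauchy--Schwarz estimate on $\langle\alpha_k,z\rangle$, match it against the geometric relation between the polystrip $T_\eta$ and the polyellipse $E_{\rho_m}^d$, and absorb the dimension-dependent constant coming from the multivariate Bernstein inequality, so that all contributions collapse into the single constant $c=c(\varphi,d)$ permitted by the statement. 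Everything else is a coordinate-wise tensorization of the one-dimensional proof.
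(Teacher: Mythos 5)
Your argument is essentially the paper's own proof: the paper likewise extends each ridge function holomorphically to a polyelliptic neighborhood of width proportional to $1/L_m$, obtains a uniform analytic norm of order $B_m$ from the $\ell^1$ constraint, invokes a multivariate Bernstein-type inequality to get the residual $A\,B_m\exp(-c\,m/L_m)$, and closes with the same triangle-inequality comparison, so your contribution is mainly to supply the geometric details (strip of half-width $\delta$, the Cauchy--Schwarz estimate with the $\sqrt d$ factor, the choice $\rho_m=1+c_1\delta/L_m$) that the paper leaves implicit. The only caveat, shared with the paper's own sketch, is that you implicitly strengthen the hypothesis to boundedness of $\varphi$ on a strip of uniform width and quote the polyelliptic Bernstein inequality with a constant depending only on $d$ even though $\rho_m\to1$, which is precisely the calibration point you yourself flag as delicate.
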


\begin{proof}
The proof follows the same strategy as in the one-dimensional case.
Each ridge function $x\mapsto\varphi(\langle\alpha_k,x\rangle)$ admits a holomorphic extension
to a complex neighborhood of $[-1,1]^d$ whose size scales as $1/\|\alpha_k\|$.
As a consequence, every function in $\mathcal N_m^{(d)}(B_m,L_m)$ admits a holomorphic
extension to a polyelliptic neighborhood of width proportional to $1/L_m$, with a uniformly
bounded analytic norm of order $B_m$.

Applying multivariate Bernstein-type inequalities yields an approximation error bounded by
$A\,B_m\exp(-c\,m/L_m)$.
The conclusion follows by comparison with the best polynomial approximation of $f$.
\end{proof}

\begin{remark}[Comparison with $C^k$ approximation rates]
For functions in $C^k([-1,1]^d)$ and not smoother, the error of best uniform approximation by
multivariate polynomials of total degree at most $m$ cannot decay faster than a polynomial
rate, typically of order $m^{-k/d}$.
If $\log B_m=o(m/L_m)$, Theorem~\ref{thm:main-d} shows that analytic activation networks remain
confined to these classical rates on $C^k$ target functions in dimension $d$.
\end{remark}

\begin{remark}[On the necessity of coefficient growth]
Escaping the polynomial approximation regime requires either exponential growth of the output
weights or sufficiently fast growth of the inner parameters.
This observation is consistent with the constructive approximation results of
Attali and Pag\`es~\cite{AttaliPages1997}, where the approximation of high-degree polynomials by
single-hidden-layer networks with analytic activations involves coefficients that become
arbitrarily large as the target degree increases.
\end{remark}

\section{Discussion and Perspectives}

The main result of this paper identifies a fundamental limitation of single-hidden-layer
neural networks with analytic activation functions and uniformly $\ell^1$-bounded
coefficients.
Up to an exponentially small term, the best achievable approximation error on generic
target functions is lower bounded by the error of best polynomial approximation.
This shows that analyticity of the activation function, while central to several
positive approximation results, also induces intrinsic rigidity that prevents
adaptation to non-analytic features.

Our result should be interpreted as complementary to classical Barron-type
approximation theorems.
When the target function itself belongs to a structured analytic class, neural networks
can achieve fast, dimension-independent approximation rates.
In contrast, when no such regularity is assumed on the target, the analytic structure
of the model class becomes a limitation rather than an advantage.
In this sense, analyticity does not provide a universal mechanism to bypass minimax
approximation barriers on generic smoothness classes.

\begin{remark}[Relation with the constructions of Attali--Pagès]
In \cite{AttaliPages1997}, fast (in fact arbitrarily fast) approximation of polynomials by
single-hidden-layer neural networks is achieved through a singular parameter regime.
More precisely, the inner parameters scale as $\alpha_k = O(h)$ while the output
coefficients behave as $\lambda_k = O(h^{-p})$ when approximating polynomials of degree $p$.
As a consequence, the Barron-type quantity
$\sum_k |\lambda_k|\,\|\alpha_k\|$ diverges as $h \to 0$. These constructions therefore fall outside the coefficient-constrained
framework considered in the present work and do not contradict the polynomial approximation
barriers established here.
\end{remark}

The analysis developed in this work is entirely deterministic and relies on a direct
comparison between best network approximation and best polynomial approximation.
This approach differs from probabilistic constructions or random feature methods and
highlights the role of analytic continuation and Bernstein-type estimates in
understanding the expressive power of neural networks.
It provides a transparent explanation of why analytic activation functions alone cannot
improve approximation rates on non-analytic targets.

\begin{remark}[Relation with Barron-type assumptions]\label{rem:barron-summary}
Barron-type approximation results control complexity through a \emph{variation} (or weighted
$\ell^1$) norm on an integral representation of the network.
In discretized models this corresponds to $\ell^1$-type constraints on the output weights,
possibly coupled with weighted controls involving the inner parameters.
For a more detailed discussion and a decomposition explaining how Barron-type constructions
escape the uniform analyticity regime considered here, see Remark~\ref{rem:barron-bridge}.
\end{remark}

\begin{remark}[Sharpness of the lower bound]
The lower bound established in Theorem~4.3 is sharp for a large class of target functions.
Indeed, whenever the best polynomial approximation error satisfies
\[
E_m(f) \gg r_m,
\]
where $r_m$ denotes the residual term appearing in Theorem~4.3 (exponentially small in the
analytic case, or sub-exponentially small in the Gevrey case), one has
\[
\inf_{g_m}\|f-g_m\|_{L^\infty([-1,1])}
\;=\;
E_m(f)\,(1+o(1)) \qquad (m\to\infty).
\]
This includes, in particular, all non-analytic functions in classical smoothness classes
$C^k([-1,1])$ for which $E_m(f)$ decays at most polynomially.
A canonical example is $f(x)=|x|$, for which it is well known that
\[
E_m(f)\asymp m^{-1}.
\]
For such functions, the residual term is negligible and networks with analytic
activation functions under
the coefficient constraints considered here cannot outperform polynomial approximation.
\end{remark}

A natural question is whether a similar rigidity phenomenon persists when the activation
function is no longer analytic but still infinitely differentiable.
In general, such an extension is not possible without additional assumptions, since
$C^\infty$ functions need not admit any holomorphic extension and may exhibit arbitrarily
slow rates of polynomial approximation.
As a consequence, no uniform approximation barrier comparable to the analytic case can
hold for general $C^\infty$ activation functions.

Nevertheless, analogous rigidity effects can be established under stronger quantitative
regularity assumptions, such as Gevrey-type bounds on the growth of derivatives.
In that case, the approximation barrier reflects the corresponding regularity class and
leads to sub-exponential, but still non-adaptive, approximation rates.
This highlights that the phenomenon identified in this work is intrinsically tied to the
analytic structure of the activation function.

Several extensions of the present work can be considered.
First, the comparison argument extends naturally to higher-dimensional settings, where
Bernstein-type estimates for multivariate analytic functions yield analogous polynomial
barriers.
Second, it would be of interest to investigate whether similar limitations persist for
deeper architectures with analytic activations under global coefficient constraints.
Finally, the present results suggest that improved approximation on generic targets
requires either non-analytic activation functions or adaptive mechanisms that allow the
model to escape global analytic regularity constraints.

Overall, our findings contribute to a clearer understanding of the interplay between
activation regularity, model constraints, and approximation power, and help delineate
the precise scope of applicability of analytic neural network models in approximation
theory.

\bibliography{approximation}

\end{document}